\begin{document}


\title{Hybrid classical-quantum linear solver using Noisy Intermediate-Scale Quantum machines}

\author{Chih-Chieh Chen}
\email[]{itriA70068@itri.org.tw}
\affiliation{Electronic and Optoelectronic System Research Laboratories, Industrial Technology Research Institute, Hsinchu 31057, Taiwan}

\author{Shiue-Yuan Shiau}
\affiliation{Physics Division, National Center for Theoretical Sciences, Hsinchu 30013, Taiwan}

\author{Ming-Feng Wu}
\affiliation{Electronic and Optoelectronic System Research Laboratories, Industrial Technology Research Institute, Hsinchu 31057, Taiwan}

\author{Yuh-Renn Wu}
\affiliation{Graduate Institute of Photonics and Optoelectronics and Department of Electrical Engineering, National Taiwan University, Taipei 10617, Taiwan}
\affiliation{Electronic and Optoelectronic System Research Laboratories, Industrial Technology Research Institute, Hsinchu 31057, Taiwan}


\date{\today}

\begin{abstract}
 We propose a realistic hybrid classical-quantum linear solver   to solve systems of linear equations of a specific type, and demonstrate its feasibility  with Qiskit on IBM Q systems. This  algorithm  makes use of quantum random walk that runs in $\mathcal{O}(N\log(N))$ time on a quantum circuit made of $\mathcal{O}(\log(N))$ qubits. The input and output are  classical data, and so can be easily accessed. It  is   robust against noise, and ready for implementation in applications such as machine learning.  
\end{abstract}

\pacs{}

\maketitle


\section{Introduction}

Algorithms that run on quantum computers hold promise to perform important computational tasks more efficiently than what can ever be achieved on classical computers, most notably Grover's search algorithm and Shor's integer factorization \cite{Nielsen2011}.  One   computational task indispensable for many problems in science, engineering, mathematics, finance, and machine learning, is solving systems of linear equations $\textbf{A}\vec{x}=\vec{b}$. Classical direct and iterative algorithms take $\mathcal{O}(N^3)$ and $\mathcal{O}(N^2)$ time \cite{Golub1996,Saad2003}. Interestingly, the Harrow-Hassidim-Lloyd (HHL) quantum algorithm \cite{Harrow2009,Clader2013,Montanaro2016,Childs2017,Costa2017,Berry2017,Dervovic2018,Wossnig2018,Biamonte2017,Ciliberto2018}, which is based on the quantum circuit model \cite{Deutsch1985}, takes only  $\mathcal{O}(\log(N))$  to solve a sparse $N\times N$ system of linear equations, while for dense systems it requires $\mathcal{O}(\sqrt{N}\log(N))$ \cite{Wossnig2018}. Linear solvers and experimental realizations that use  quantum annealing and adiabatic quantum computing machines \cite{Kadowaki1998,Farhi2000,Aharonov2008} are also reported \cite{OMalley2016,Borle2018,Chang2018}. Most recently, methods \cite{Wen2018,Subasi2018} inspired by adiabatic quantum computing are proposed to be implemented on circuit-based quantum computers.  Whether substantial quantum speedup exists in these algorithms remains unknown.

In practice, the applicability of quantum algorithms to  classical systems are limited by the short coherence time of noisy quantum hardware in the so-called Noisy Intermediate-Scale Quantum (NISQ) era \cite{Preskill2018} and the difficulty in executing the input and output of classical data. Other roadblocks toward practical implementation include limited number of qubits, limited connectivity between qubits, and large error correction overhead. At present, experiments demonstrating the HHL linear solver on circuit quantum computers are limited to $2\times 2$ matrices \cite{Cao2012,Cai2013,Barz2014,Pan2014,Zheng2017,Lee2018},  while  linear solvers   inspired by adiabatic quantum computing are limited to $8\times 8$ matrices \cite{Wen2018,Subasi2018}. For quantum annealers, the state-of-the-art linear solvers can solve up to $12\times 12$ matrices \cite{Chang2018}.

In addition to the problems of limited available entangled qubits and short coherence time, the HHL-type algorithms are designed to work  only when  input and output are  quantum states \cite{Aaronson2015}. This condition imposes severe restriction to practical applications in the NISQ era \cite{Childs2009,Aaronson2015,Preskill2018}. It has been shown that the HHL algorithm can not extract information about the norm of the solution vector $\vec{x}$ \cite{Harrow2009}. A state preparation algorithm for inputting a classical vector $\vec{b}$ would take $\mathcal{O}(N)$ time \cite{Mottonen2004,Plesch2011,Aaronson2015,Coles2018}, with large overhead for current hardware. In addition, quantum state tomography is required  to read out the classical solution vector $\vec{x}$,  which is a demanding task \cite{James2001,Suess2017}, except  for special cases like  one-dimensional entangled qubits \cite{Cramer2010}. Inputting the matrix $\textbf{A}$ is also a challenge that may kill the quantum speedup \cite{Nielsen2011,Cao2012,Cai2013,Barz2014,Pan2014,Zheng2017,Lee2018}.

 In this work, we propose a  hybrid classical-quantum  linear solver that uses circuit-based quantum computer to perform quantum random walks. In contrast to the  HHL-type linear solvers, the solution vector $\vec{x}$ and the constant vector $\vec{b}$   in this hybrid algorithm stay as classical data in the classical registers. Only the matrix $\textbf{A}$ is encoded in quantum registers.  The idea is similar to that of variational quantum eigensolvers \cite{Peruzzo2014,Wecker2015,McClean2016,Kandala2017}, where  quantum speedup  is exploited only for sampling  exponentially large state Hilbert spaces, while the rest of  computational task is done by classical computer. This makes it easy to perform  data input and output: the $\vec{b}$ vector can be arbitrary, and the components and the norm of  the $\vec{x}$ vector can be easily accessed. \

We consider matrices  that are useful for Markov decision problems such as in reinforcement learning \cite{Sutton1998}. We show that  these matrices can be efficiently encoded by introducing the Hamming cube structure: a square matrix of size $N$ requires $\mathcal{O}(\log(N))$ quantum bits only. The quantum random walk algorithm we here propose takes  $\mathcal{O}(\log(N))$ time to obtain one component of the $\vec{x}$ vector. We also show that in the quantum random walk algorithm  the matrices produced as a result of qubit-qubit correlation  are inherently complex, which can be an advantage for performing difficult tasks. For  the same amount of time, the matrices the classical random walk algorithm can solve are limited to  factorisable ones only.

We have tested the quantum random walk algorithm using software development kit Qiskit on IBM Q systems \cite{IBMQ2017qiskit,IBMQ2017}. Numerical results show that this  linear solver  works  on ideal quantum computer, and most importantly, also on noisy quantum computer having a  short coherence time, provided the  quantum circuit that encodes the $\textbf{A}$ matrix  is not too long.    The limitation due to machine errors is discussed.

\section{Methods}

\begin{figure*}[t]	
	\includegraphics[width=\textwidth]{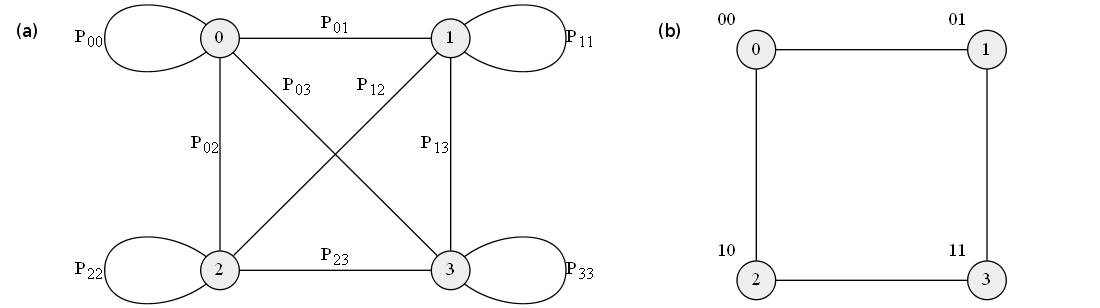}\\
	\caption{(a) Quantum (or classical) random walk on an undirected $N=4$ graph. The  transition probability of going from node $I$ to node $J$ or vice versa is equal to $P_{I,J}$, these elements forming a $4\times4$ matrix. (b) The four nodes on this Hamming cube are labeled by integers $(0,1,2,3)$; they are encoded as four different states $|00\rangle$, $|01\rangle$, $|10\rangle$, $|11\rangle$, respectively.  \label{rw}}
\end{figure*}

We consider a system of linear equations of real numbers $\textbf{A}\vec{x}=\vec{b}$, where $\textbf{A}$ is a $N\times N$ matrix to be solved, $N\times1$ vectors $\vec{x}$ and $\vec{b}$ are, respectively, the solution vector and a vector of constants. Without loss of generality, we rewrite $\textbf{A}$ as 
\begin{equation}
\textbf{A}=\mathbf{1}-\gamma \textbf{P}\,,\label{matrxA}
\end{equation}
 where $\mathbf{1}$ is the identity matrix, and $ 0 < \gamma < 1$ is a real number. 
 We take $\textbf{P}$  as a (stochastic) Markov-chain transition matrix, such that $P_{I,J} \ge 0$ and $\sum_J P_{I,J}=1$, where $P_{I,J}$ refers to the $\textbf{P}$ matrix element in the $J$-th column of the $I$-th row. This type of linear systems appears in value estimation for reinforcement learning \cite{Barto1993,Sutton1998,Paparo2014}, and radiosity equation in computer graphics \cite{Goral1984}. In reinforcement learning algorithms, given a fixed policy of the learning agency, the vector $\vec{x}$ is the value function that determines the long-term cumulative reward, and efficient estimation of this function is  key to successful  learning \cite{Sutton1998}.  Note that the matrix $\textbf{A}$ given in Eq.~(\ref{matrxA}) used as model Hamiltonian matrix  belongs to the so-called  stoquastic Hamiltonians \cite{Bravyi2006,Bravyi2015}.   \

To solve $\textbf{A}\vec{x}=\vec{b}$, we expand the solution vector as Neumann series, that is, $\vec{x}=\textbf{A}^{-1}\vec{b}=(\mathbf{1}-\gamma \textbf{P})^{-1}\vec{b}=\sum_{s=0}^\infty \gamma^s \textbf{P}^s \vec{b}$. Let us define the $I_0$ component of $\vec{x}$ truncated up to $\gamma^c$ terms as 
\begin{equation}
x^{(c)}_{I_0}=\sum_{s=0}^c \gamma^s \sum_{I_1,...,I_s=1}^N P_{I_0,I_1}...P_{I_{s-1},I_s} b_{I_s}\,. \label{expand}
\end{equation}
 This expression for $x^{(c)}_{I_0}$  can be evaluated by random walks  on a graph of $N$ nodes, with the probability of going from  node $I$ and node $J$ of the graph given by the matrix element $P_{I,J}$, which we set as symmetric (undirected), namely $P_{I,J}=P_{J,I}$. An example of a four-node graph is shown in Fig.~\ref{rw}(a).  By performing a series of random walks starting from node $I_0$, walking $c$ steps according to the transition probability matrix $\textbf{P}$, and ending at some node $I_c$, Eq.~(\ref{expand}) can be readily calculated to get the $x^{(c)}_{I_0}$ value, which is close to the solution $x_{I_0}$ for some large $c$ steps.  Truncating the series introduces an error $\epsilon\sim\mathcal{O}(\gamma^c)$. So, for a given $\gamma$, the number of steps necessary to meet a given tolerance $\epsilon$ is equal to  $c\sim \log (1/\epsilon)/\log (1/\gamma)$.

The above expansion procedure can be extended to more general matrices $\textbf{A}$ by setting $\textbf{A}=\mathbf{1}-\textbf{B}$ where $B_{I,J}=P_{I,J}v_{I,J}$ for  real matrix elements $v_{I,J}$ provided that the eigenvalues of $\textbf{B}$ are bounded by $-1<\mbox{eig}(\textbf{B})_I<1$.

For classical Monte Carlo methods to compute Eq.~(\ref{expand}), it takes  $\mathcal{O}(N)$ time to calculate the cumulative distribution function that is used to determine the next walking step. So, these linear systems   can be solved by classical Monte Carlo methods within $\mathcal{O}(N^2)$ time \cite{Metropolis1949,Forsythe1950,Wasow1952,Lu2003,Branford2008}. Similar Monte Carlo  methods have been extended to more general matrices for applications in Green's function Monte Carlo method for many-body physics \cite{Ceperley1980,Negele1988,Landau2005}.

\subsection{Encoding  state spaces on Hamming cubes}
As for material resources, in general it takes at least $\mathcal{O}(N)$ classical bits to store a row of a  stochastic transition matrix $\textbf{P}$ (or $\textbf{A}$). However, for the classical and quantum random walks we here consider, it is possible to reduce significantly the number of classical or quantum bits necessary to encode the corresponding transition probability matrix $\textbf{P}$ to $\mathcal{O}(\log(N))$ by introducing the Hamming cube (HC) structure \cite{Hamming1950}.
To do it, we first associate each graph node with a bit string. As shown in Fig.~\ref{rw}(b), the four nodes of the $N=4$ graph are fully represented by  two bits.  Node states $|0\rangle$, $|1\rangle$, $|2\rangle$, and $|3\rangle$ represent binary string states $|00\rangle$, $|01\rangle$, $|10\rangle$, and $|11\rangle$, respectively. For a $N$-node graph,  only $\log_2(N)= n$ (to base 2)  bits are needed to encode the integers $J\in \{0,1,..., N{-}1\}$, each representing the $n$-bit binary string state, namely  $|J\rangle=|j_{n-1},...j_1,j_0\rangle$, where $j_\ell$ is 0 or 1.

 \begin{figure*}[t]
 	\includegraphics[width=\textwidth]{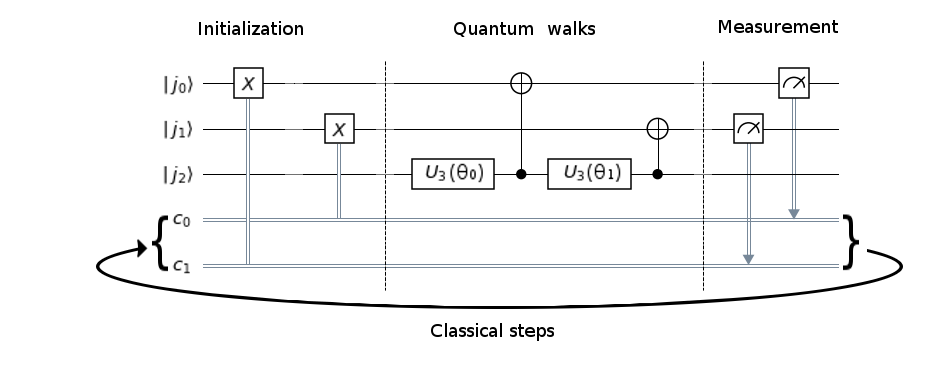}
 	\caption{Discrete-time coined quantum walk circuit for the $4\times 4$ transition matrix given in Eq.~(\ref{pmatrix4}). Qubits $j_0$ and $j_1$ are state register qubits to represent the four-node graph in Fig.~\ref{rw}, first set as $0$ before initialization, while the qubit $j_2$ is the coin register qubit.  The measured registers $c_0$ and $c_1$ are fed back to initialize the next iteration. The classical-step  is repeated $c$ times to obtain the Neumann expansion up to order $c$. } \label{circuit}
 \end{figure*}

\subsection{Classical random walk}

Before we introduce our quantum random walk algorithm, let us first consider classical random walks.

 To perform random walks on a $N$-node graph, we use a simple coin-flipping process with $\mathcal{O}(\log(N))$ time steps. The $\ell$-th bit flips with probability $\sin^2(\theta_\ell/2)$ or does not flip with probability $\cos^2(\theta_\ell/2)$, the total probability being equal to 1. The transition probability matrix elements are given by
\begin{equation}
P^{classical}_{J',J}=\prod_{\ell=0}^{n-1} \Big|\cos^2 (\frac{\theta_\ell}{2})\Big|^{1-i_\ell}\Big|\sin^2 (\frac{\theta_\ell}{2})\Big|^{i_\ell} , \label{clpmatelement}
\end{equation}
where the  $n$-bit binary string state $|I\rangle_c=|i_{n-1},...,i_1,i_{0}\rangle_c$  is determined by $|J'\rangle_c= |I\rangle_c \oplus |J\rangle_c$, where $\oplus$ denotes the bitwise exclusive or (XOR) operation, and the subscript $c$ denotes classical states. The total number of $|\sin^2(\theta_\ell/2)|$, given by $d^{classical}=\sum_{\ell=0}^{n-1}i_\ell $, is the Hamming weight of $|I\rangle_c$, and so corresponds to the Hamming distance between $|J'\rangle_c$ and $|J\rangle_c$ states. This metric measures the number of steps that a walker needs to go from $|J\rangle_c$  to $|J'\rangle_c$  on the Hamming cube.\

 For the four-node graph shown in Fig.~\ref{rw}, the transition probability matrix $\textbf{P}$ for classical random walks reads
\begin{widetext}
\begin{eqnarray}
\textbf{P}^{classical}&=&\begin{bmatrix}
\cos^2(\frac{\theta_0}{2} )\cos^2(\frac{\theta_1}{2}) & \sin^2(\frac{\theta_0}{2} )\cos^2(\frac{\theta_1}{2}) & \cos^2(\frac{\theta_0}{2} )\sin^2(\frac{\theta_1}{2}) &  \sin^2(\frac{\theta_0}{2} )\sin^2(\frac{\theta_1}{2})\\
& \cos^2(\frac{\theta_0}{2} )\cos^2(\frac{\theta_1}{2}) &  \sin^2(\frac{\theta_0}{2} )\sin^2(\frac{\theta_1}{2}) & \cos^2(\frac{\theta_0}{2} )\sin^2(\frac{\theta_1}{2})\\
&& \cos^2(\frac{\theta_0}{2} )\cos^2(\frac{\theta_1}{2}) &\sin^2(\frac{\theta_0}{2} )\cos^2(\frac{\theta_1}{2})\\
&&& \cos^2(\frac{\theta_0}{2} )\cos^2(\frac{\theta_1}{2})\\
\end{bmatrix}\nonumber\\
&=&\begin{bmatrix}
\cos^2(\frac{\theta_1}{2} ) & \sin^2(\frac{\theta_1}{2} )\\
\sin^2(\frac{\theta_1}{2} ) & \cos^2(\frac{\theta_1}{2} )
\end{bmatrix}
\otimes \begin{bmatrix}
\cos^2(\frac{\theta_0}{2} ) & \sin^2(\frac{\theta_0}{2} )\\
  \sin^2(\frac{\theta_0}{2} ) & \cos^2(\frac{\theta_0}{2} )
\end{bmatrix} , \label{clpmatrix4}
\end{eqnarray}
\end{widetext}
where $\otimes$ denotes the Kronecker product.  The lower triangular part of the matrix is omitted due to symmetry.
This simple case demonstrates a general feature for classical transition probability matrix $\textbf{P}^{classical}$: the probability of flipping both bits  is simply a product of the probabilities of flipping the $0$-th bit and the $1$-th bit in arbitrary order. For instance, $P^{classical}_{0,3}=P^{classical}_{|00\rangle,|11\rangle}=\sin^2(\theta_0/2 )\sin^2(\theta_1/2)=P^{classical}_{|00\rangle,|01\rangle}P^{classical}_{|00\rangle,|10\rangle}$; similarly for the other $P^{classical}_{I,J}$'s. The fact that  $\textbf{P}^{classical}$  can be factorized into a Kronecker product of the matrices of each individual  bit indicates that  each bit flips independently, as for a Markovian process.

\subsection{Quantum random walk}

We can  simulate quantum walks \cite{Aharonov1993,Childs2001,Aharonov2001,Moore2001,Szegedy2004,Kendon2006,Childs2017Lecture} on a $N$-node graph to obtain the solution vector $\vec{x}$ from Eq.~(\ref{expand}).  To do it, we use discrete-time coined quantum walk circuit \cite{Kosik2005,Shikano2010}. The circuit for the four-node graph in Fig.~\ref{rw} is  shown in Fig.~\ref{circuit}.  The first two qubits $j_0$ and $j_1$ are state registers that will be initialized to encode the four-node graph, while the third qubit $j_2$ is the coin register.

To derive the quantum transition probability matrix on a graph of $N$ nodes, we consider the state space of the $(n+1)$-qubit circuit as spanned by $\{|i_n\rangle_{\diamond} \otimes |i_{n-1},...,i_1,i_0 \rangle_{q} \}$ with $n=\log_2(N)$: the $(n+1)$-th qubit registers the coin state $|i_n\rangle_{\diamond}$, and the other $n$ qubits encode the $N$-node graph. We take the convention that the rightmost bit is $i_0$. Given a  $n$-bit string  $(j_{n-1},...,j_1,j_0)$, the initialized quantum state reads 
\begin{eqnarray}
|\psi_{0,J} \rangle&=& |0 \rangle_{\diamond} \otimes |j_{n-1},j_{n-2},...,j_2,j_1,j_0 \rangle_{q} \\
&=& |0 \rangle_{\diamond} \otimes|J\rangle_q\,.\nonumber
\end{eqnarray}

Next we let the $|\psi_{0,J} \rangle$ state evolve in random walk: in each walking step, we toss the coin by rotating the coin qubit, and then flip a graph qubit by applying the CNOT gate. This process is repeated on all  the $n$ qubits in the $|j_{n-1},j_{n-2},...,j_2,j_1,j_0 \rangle_{q}$ state, starting with the 0-th qubit.
The corresponding evolution operator reads 
\begin{equation}\label{Ustep}
\mathcal{U}{=}\prod_{k=0}^{n-1}\ ' \Big(|0\rangle_{\diamond} \langle 0|_{\diamond} \otimes 1_{q} + |1\rangle_{\diamond} \langle 1|_{\diamond} \otimes X_k\Big) \cdot \Big(U_3({\bf u}_k)\otimes 1_{q}\Big) ,
\end{equation}
 where the prime ($'$) on the $\prod$ denotes that the $k=0$ operator  applies first to the right, followed by the $k=1$ operator, and so on; the  $1_{q}$ operator is an identity map on the $n$-qubit state $|J\rangle_q$, $X_k$ is a Pauli $X$ gate (the  Pauli matrix $\sigma_x$) that acts on the $k$-th qubit,  and $U_3({\bf u})$ is a single-qubit rotation operator
\begin{equation}
U_3({\bf u})=U_3(\theta,\phi,\lambda)= 
\begin{bmatrix}
\cos(\frac \theta 2) & -e^{i\lambda}\sin(\frac \theta 2)  \\
e^{i\phi} \sin(\frac \theta 2) & e^{i(\lambda+\phi)} \cos(\frac \theta 2)  \\
\end{bmatrix} 
\end{equation}
that acts on the coin qubit state.  Note that the first parentheses in Eq.~(\ref{Ustep}) represents a CNOT gate. It is important to note that here we  use {\it one} quantum coin only to decide on the Pauli $X$ gate operation over all the $n$ qubits, so the order of qubit operations plays a role in the determination of the  transition probability matrix $\textbf{P}$.
   

 The first step is to project $\mathcal{U}$ on $|\psi_{0,J}\rangle$, which leads to
\begin{eqnarray}
\lefteqn{\mathcal{U} |\psi_{0,J} \rangle= \sum^1_{i_{n{-}1},...,i_0=0}  \prod_{\ell=0}^{n-1} U_3({\bf u}_\ell)_{i_{\ell}, i_{\ell-1}}}\label{state}\\
&\times &|i_{n-1}\rangle_{\diamond} \otimes |i_{n-1}\oplus j_{n-1},...,i_1\oplus j_{1},i_0\oplus j_{0}\rangle_{q} \,, \nonumber
\end{eqnarray}
with $i_{-1}= 0$. By tracing out the coin degree of freedom, we obtain the  reduced density matrix for the graph and hence the probability matrix  
$P_{J',J} =  \langle J'| \mbox{Tr}_\diamond [\mathcal{U} |\psi_{0J}  \rangle \langle \psi_{0J} | \mathcal{U}^\dagger ]  |J'\rangle $. The resulting quantum transition probability matrix  elements then read
\begin{eqnarray}
 P^{quantum}_{J',J}&=& \prod_{\ell=0}^{n-1} \Big|U_3({\bf u}_\ell)_{i_{\ell}, i_{\ell-1}}\Big|^2\label{prob}\\
 &=& \prod_{\ell=0}^{n-1} \Big|\cos^2 (\frac{\theta_\ell}{2})\Big|^{1-(i_{\ell}\oplus i_{\ell-1})}\Big|\sin^2 (\frac{\theta_\ell}{2})\Big|^{i_{\ell}\oplus i_{\ell-1}} , \nonumber 
\end{eqnarray}
where $|I\rangle_q=|i_{n-1},...,i_1,i_0 \rangle_{q}$  is determined by $|J'\rangle_q= |I\rangle_q \oplus |J\rangle_q$.  For one $\mathcal{U}$ quantum evolution, the complex phase factors $e^{i\phi_\ell}$ and $e^{i\lambda_\ell}$ play no role. We will see later that these  phases come into play in the case of multiple evolutions $\mathcal{U}^q$.

To understand the transition probability matrix produced by the quantum  walk circuit (Fig.~\ref{circuit}), let us again consider the four-node graph in  Fig.~\ref{rw}, where 
\begin{widetext}
\begin{equation}
\textbf{P}^{quantum}=\begin{bmatrix}
\cos^2(\frac{\theta_0}{2} )\cos^2(\frac{\theta_1}{2}) & \sin^2(\frac{\theta_0}{2} )\sin^2(\frac{\theta_1}{2}) & \cos^2(\frac{\theta_0}{2} )\sin^2(\frac{\theta_1}{2}) & \sin^2(\frac{\theta_0}{2} )\cos^2(\frac{\theta_1}{2})\\
& \cos^2(\frac{\theta_0}{2} )\cos^2(\frac{\theta_1}{2}) & \sin^2(\frac{\theta_0}{2} )\cos^2(\frac{\theta_1}{2})&\cos^2(\frac{\theta_0}{2} )\sin^2(\frac{\theta_1}{2})\\
&& \cos^2(\frac{\theta_0}{2} )\cos^2(\frac{\theta_1}{2}) &\sin^2(\frac{\theta_0}{2} )\sin^2(\frac{\theta_1}{2})\\
&&& \cos^2(\frac{\theta_0}{2} )\cos^2(\frac{\theta_1}{2})\\
\end{bmatrix}. \label{pmatrix4}
\end{equation}
\end{widetext}
Unlike the above classical random walk, this matrix cannot be factorized into a Kronecker product of the matrices of each individual qubit.  The probability  of one qubit flipping depends on the other, indicating that the two qubits are correlated, or in quantum information theory entangled.

\begin{figure}[t]
	\includegraphics[width=0.5\textwidth]{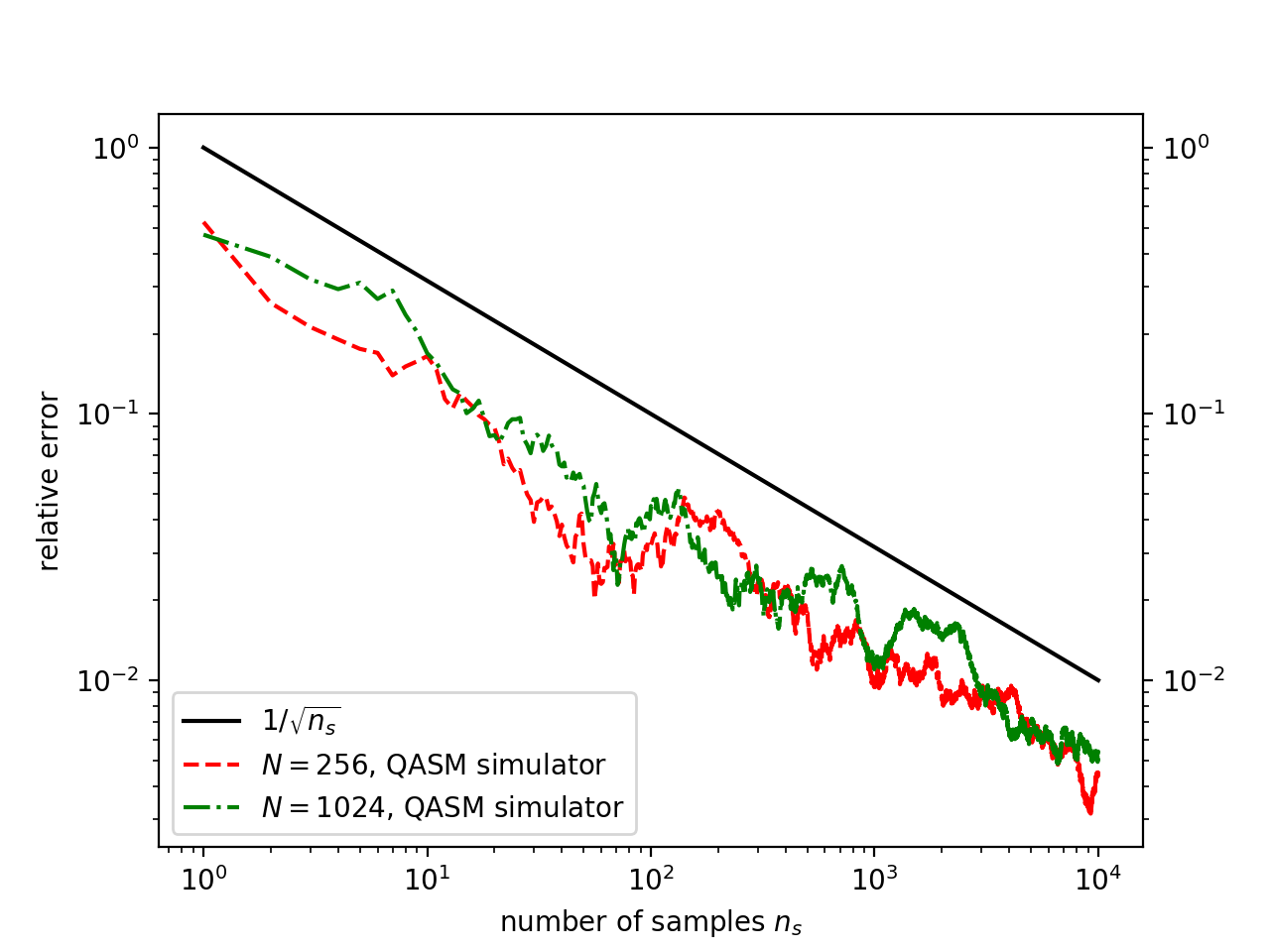} \\
	\includegraphics[width=0.5\textwidth]{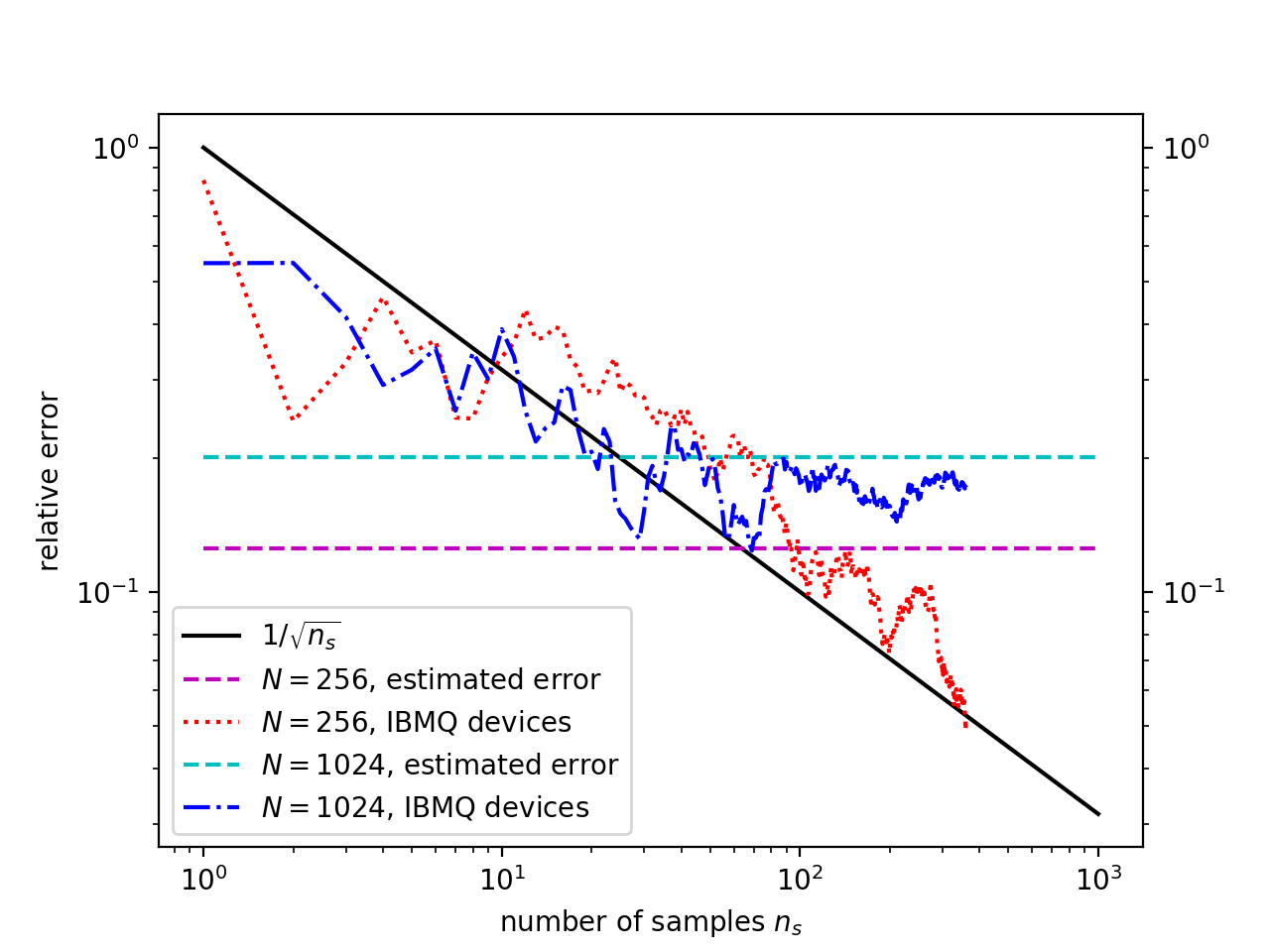}%
	\caption{Relative errors $\epsilon=|x^{exact}_I-x_I|/|x^{exact}_I|$ as a function of the sampling number $n_s$ for  $N=256$ and $N=1024$ matrices. The relevant parameters and  estimated errors for these two matrices can be found in Table \ref{data}.	Black solid lines represent the $1/\sqrt{n_s}$ error reduction expected for Monte Carlo calculations.
(Upper figure) Red dashed line and green dash-dotted line are the results computed by the QASM simulator. (Lower figure) Blue dash-dotted line and red dotted line  are data for the same matrices computed by the IBM Q 20 Tokyo machine or Poughkeepsie machine. Cyan and magenta horizontal dashed lines depict the estimated errors.} \label{errrw}
\end{figure}

In comparison to Eq.~(\ref{clpmatelement}) obtained from the classical random walk, we see that  additional $\mathcal{O}(\log(N))$ XOR operations are required for classical computer to obtain the same quantum transition probability matrix, as can be seen from Eq.~(\ref{prob}). In the case of $N=4$, the classical and quantum transition probability  matrices given by Eqs.~(\ref{clpmatrix4}) and (\ref{pmatrix4}) are related by a permutation 
$\begin{pmatrix}
0&1&2&3 \\
0&3&2&1
\end{pmatrix}$. The quantum  version of the Hamming distance  between $|J\rangle_q$ and $|J'\rangle_q$ is  given by $d^{quantum}=\sum_{\ell=0}^{n-1} i_{\ell}\oplus i_{\ell-1}$, which clearly shows the temporal correlation between the  $\ell$-th and $(\ell-1)$-th qubits. We attribute this correlation to the fact that only {\it one} quantum coin is used to decide on the Pauli $X$ gate over all the $n$ qubits, thus creating some connection between qubits, and to the non-Markovian nature of  quantum walk dynamics \cite{Breuer2016,deVega2017}, in which the quantum circuit memorizes the qubit state $|i_{\ell-1}\rangle$ when it is walking in the direction that has  the  qubit state $|i_\ell\rangle$ in the Hamming cube.


It can be of interest to note that the  circuit given in Eq.~(\ref{Ustep}) is just one possible design leading to a particular correlation between qubits. In general, there are numerous ways to rearrange the walking steps  to obtain different kinds of correlation, and it is possible to design the circuit for specific purposes. A simple  way is to perform the walking steps in Eq.~(\ref{Ustep}) in a reverse order, operating the $k=n-1$ operator to the right first, followed by the $k=n-2$ operator, and so on.  This leads to a different metric $d^{quantum}=\sum_{\ell=0}^{n-1} i_{\ell+1} \oplus i_{\ell}$ with $i_n = 0$. It turns out that this $d^{quantum}$ corresponds to the Hamming distance in the Gray code representation.

The Gray code uses single-distance coding for integer sequence $0 \rightarrow 1 \rightarrow \cdots \rightarrow N-1$, where adjacent integers differ  by single bit flipping.  In the case of the four-node graph  in Fig.~\ref{rw}, the integers $(0,1,2,3)$ in the Gray code representation correspond to the $|00\rangle$, $|01\rangle$, $|11\rangle$, $|10\rangle$ states, respectively. It is obvious that this Gray code representation can be obtained from  the natural binary code representation by a permutation $\begin{pmatrix}
0&1&2&3 \\
0&1&3&2
\end{pmatrix}$. There also exists a permutation that transforms  $\textbf{P}^{classical}$ to $\textbf{P}^{quantum}$  in the Gray code basis. The proof of this correspondence for arbitrary $N$ is given in  Appendix \ref{app:gray}. Both the transform and inverse transform between the natural binary code and Gray code representations take $\mathcal{O}(\log(N))$ operations using classical computer \cite{Knuth2005}. This again shows that the quantum random walk algorithm gains  $\mathcal{O}(\log(N))$ improvement over the classical one.

\begin{table}
	\caption{ Relevant parameters for the matrices $\textbf{A}$ of various sizes used for numerical experiments. Estimated error is defined in the text. \label{data}}
	\begin{ruledtabular}
		\begin{tabular}{c c c c c c}
			N  & c & q& $\gamma$ & Condition number  & Estimated error $\epsilon_0$\\ \hline
			
			64  & 6 & 2 & 0.3 & 1.457 &  \\ \hline
			128  & 6 & 2 & 0.3 & 1.599 &  \\\hline
			256  & 6 & 1 & 0.3 & 1.857 & 0.1255 \\ \hline
			1024  & 10 & 1 & 0.5 & 2.973 & 0.2010 \\ 
		\end{tabular}
	\end{ruledtabular}
\end{table}

 As the change of the Hamming distance for each walking step in the Gray code representation is $\delta d=1$, a quantum  walker in a geodesic of a Hamming cube automatically walks with the least action, that is, with the minimum change of the Hamming distance. This geodesic is a Hamiltonian path on hypercubes \cite{Gilbert1958}.



\begin{figure}[t]
	\includegraphics[width=0.5\textwidth]{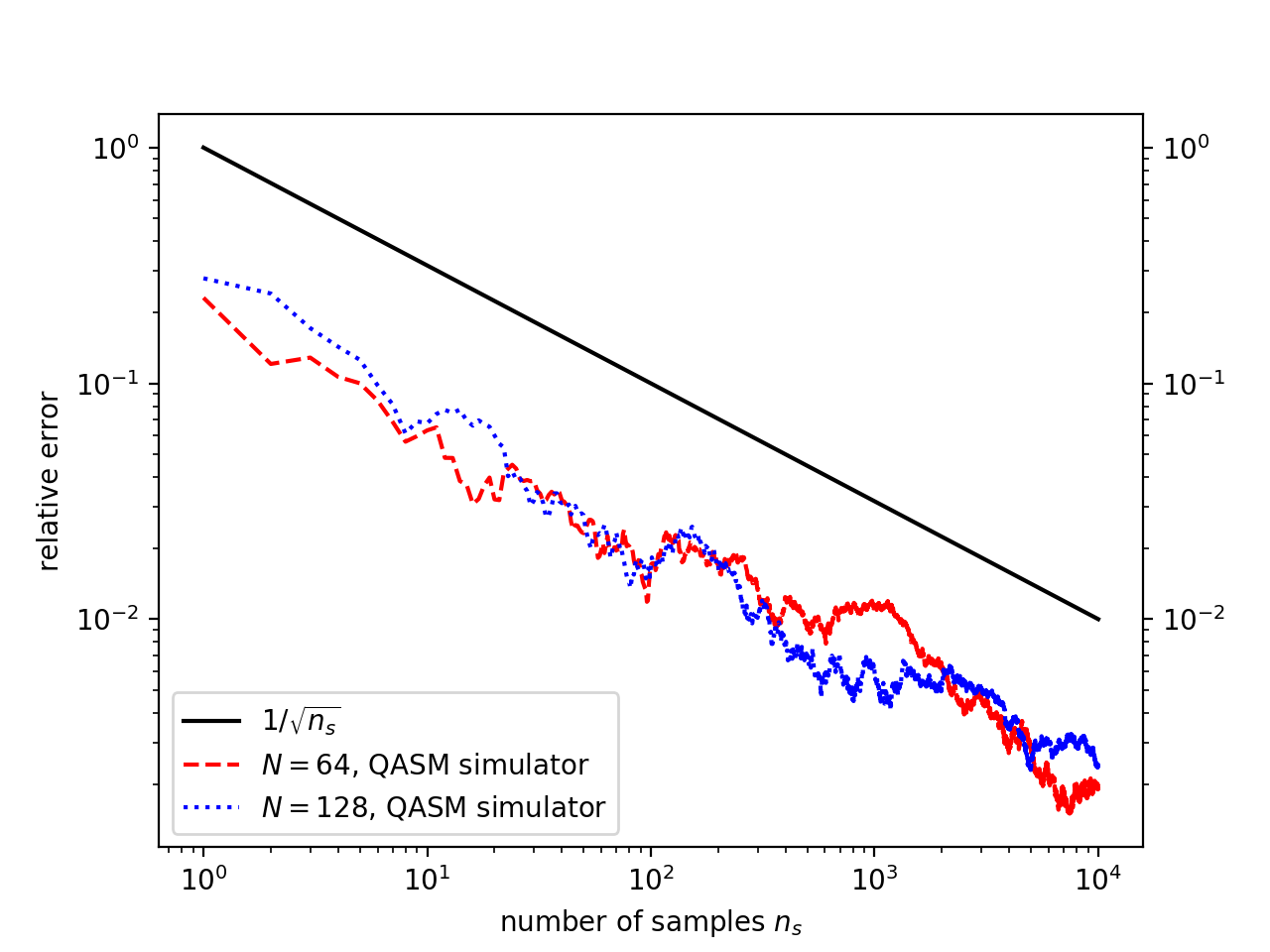} 
	\caption{Relative errors $\epsilon=|x^{exact}_I-x_I|/|x^{exact}_I|$ as a function of the sampling number $n_s$ for  $N=64$ and $N=128$ matrices, obtained by performing two quantum walk evolutions, $\mathcal{U}^2$. Black solid lines represent the $1/\sqrt{n_s}$ error reduction expected for Monte Carlo calculations. Red dashed line and blue dotted line are the results computed by the QASM simulator.} \label{err2}
\end{figure}

It is possible to increase the level of  correlation in the probability matrix by performing multiple quantum evolutions, $\mathcal{U}^q$, where $q$ is the number of quantum walk evolutions. The probability matrix produced by two quantum walk evolutions, $\mathcal{U}^2$, is given by (see Appendix \ref{app:Dertwostep} for derivation)
\begin{eqnarray}
P_{J',J}^{quantum}
&=&\sum_{k=0}^1 \Big|\sum_{I  } f(I, J'\oplus J \oplus I) \delta_{i_{n-1},k} \Big|^2 \,,\label{PJJfortwo}
\end{eqnarray}
where, for $I = (i_{n-1},...,i_0)$ and $K = (k_{n-1},...,k_{0})$, 
\begin{eqnarray}
f(I, K)&=& [U_3({\bf u}_{n-1})]_{i_{n-1},i_{n-2}}\cdots [U_3({\bf u}_{0})]_{i_{0},k_{n-1}} \label{fIK}   \\
&& \times [U_3({\bf u}_{n-1})]_{k_{n-1},k_{n-2}}\cdots[U_3({\bf u}_{0})]_{k_{0},0}  \, ,\notag
\end{eqnarray}
and
\begin{eqnarray}
&& [U_3(\theta,\phi,\lambda)]_{\mu,\nu} \\
&&=  e^{i[\mu \phi + \nu \lambda]} (-1)^{ (1-\mu) \nu} \Big(\cos(\frac{\theta}{2})\Big)^{1-(\mu \oplus \nu)} \Big(\sin(\frac{\theta}{2})\Big)^{\mu \oplus \nu} \label{u3explicit}   \, .\notag
\end{eqnarray}
The fact that the summation over  $I$ in Eq.~(\ref{PJJfortwo}) runs over $\mathcal{O}(2^n)$ state configurations before the square is taken points to the complicated mixing of negative signs and complex phases $\phi_\ell$'s and $\lambda_\ell$'s. The sign problem makes it difficult for pure classical Monte Carlo methods to simulate this transition.

In general, the  dependence of the two-evolution quantum probability matrix on  $\theta_\ell$'s, $\phi_\ell$'s and $\lambda_\ell$'s,  is not trivial. Its explicit expression for the $N=4$ graph is given in Appendix \ref{app:Fournode}. The phases $\phi_\ell$'s and $\lambda_\ell$'s enter into play for graph sizes $N\geq 8$.  On the other hand, the two-evolution probability matrix for classical random walk is given by
\begin{eqnarray}
&&P_{J',J}^{classical} = \\
 && \prod_{\ell=0}^{n-1}  \Big|\cos^{4}(\frac{\theta_\ell}{2})+\sin^{4}(\frac{\theta_\ell}{2})\Big|^{1-i_\ell} \Big|2\cos^{2}(\frac{\theta_\ell}{2}) \sin^{2}(\frac{\theta_\ell}{2}) \Big|^{i_\ell} , \notag 
\end{eqnarray}
which is still factorisable.


\begin{table*}
	\caption{Comparison of various algorithms for solving $N\times N$ linear systems $\textbf{A}\vec{x}=\vec{b}$, with respect to time and space complexities, and Input/Output issues. Note that  for classical Monte Carlo (MC) method, classical random walk (RW) and hybrid quantum random walk (QW), the time complexities in the table are per sampling time. It takes $\mathcal{O}(c n_s)$ samplings to achieve the desired accuracy (see the text).   \label{complexity}}
	\centering
	\begin{ruledtabular}
		\begin{tabular}{c c c c}
			Algorithm & Time & Space for $\textbf{A}$ & Input/Output \\ \hline
			Classical Direct\cite{Golub1996,Saad2003} & $\mathcal{O}(N^3)$ & $\mathcal{O}(N^2)$  & efficient for any $\textbf{A},\vec{x},\vec{b}$ \\ \hline
			Classical Iterative\cite{Golub1996,Saad2003} & $\mathcal{O}(N^2)$ & $\mathcal{O}(N^2)$  & efficient for any $\textbf{A},\vec{x},\vec{b}$ \\ \hline
			Quantum HHL\cite{Harrow2009} & $\mathcal{O}(\log (N))$ & $\mathcal{O}(\log (N))$ qubits & norm $||\vec{x}||$  not available \\ 
			& &  & difficult for $\textbf{A},\vec{x},\vec{b}$ \\ \hline
			Classical MC\cite{Forsythe1950,Barto1993,Lu2003} &  $\mathcal{O}(N)$  & $\mathcal{O}(N)$ & efficient for any $\vec{x},\vec{b}$ \\
			(for one component $x_I$) &  &  &  limited $\textbf{A}$ (stochastic $\textbf{P}$)\\ \hline
			Classical RW on HC &$\mathcal{O}(\log (N))$  & $\mathcal{O}(\log (N))$  & efficient for any $\vec{x},\vec{b}$\\
			(for one component $x_I$) & &  & limited $\textbf{A}$ (factorisable $\textbf{P}$) \\ \hline
			Hybrid QW on HC & $\mathcal{O}(\log (N))$    & $\mathcal{O}(\log (N))$ qubits & efficient for any $\vec{x},\vec{b}$\\
			(for one component $x_I$) & &  & limited $\textbf{A}$ (correlated $\textbf{P}$)\\
		\end{tabular}
	\end{ruledtabular}
\end{table*}

\section{Numerical results}

Figure \ref{errrw} shows  the performance of our hybrid quantum random walk algorithm on  linear systems  of dimension $N=256$ and $N=1024$. Their relative errors decrease with increasing sampling number. The relative error is defined as $\epsilon=|x^{exact}_I-x_I|/|x^{exact}_I|$ for the $I$-th component of the solution vector $\vec{x}$, where $\vec{x}^{exact}$ is the exact result obtained with the NumPy package. To demonstrate, we use randomly generated   vectors $\vec{b}$ and matrices $\textbf{A}$ with a uniform distribution, $b_I \in [-1,1]$ and  $\theta_\ell \in [0,\pi]$.  We choose $\gamma$ and $c$ such that the error introduced by the Neumann expansion is within $\mathcal{O}(10^{-4})$. See Table \ref{data} for the relevant parameters of the two matrices. The program is written and compiled with Qiskit version 0.7.2. The simulation results (upper figure) are obtained using QASM simulator \cite{IBMQ2017qiskit}, while the quantum machine results (lower figure) are obtained using IBM Q 20 Tokyo device or Poughkeepsie device \cite{IBMQdata2019,IBMQnews2019}.

The  curves obtained by the QASM simulator are results averaged over ten runs. Their relative errors decrease as $1/\sqrt{n_s}$, where  $n_s$ is the number of random walk samplings. This $1/\sqrt{n_s}$ reduction  is typical of Monte Carlo simulations, because the hybrid quantum walk algorithm has essentially the same structure as  classical Monte Carlo methods. So, we do not gain any speedup in  sampling number. Yet, this result substantiates the fact that our proposed algorithm works on ideal quantum computers.

For real IBM Q quantum devices,  the accuracy stops improving after a certain number of samplings (see the plateau (blue dash-dotted curve) and oscillation (red dotted curve) in Fig.~\ref{errrw}). This hardware limitation can be estimated using an error  formula $\epsilon_0 \sim \kappa \times E_r$, where $\kappa$ is the condition number for the matrix $\textbf{A}$ and $E_r$ is the readout error of real machines.   The condition number $\kappa$ gauges the ratio of the relative error in the solution vector $\vec{x}$ to the relative error in the $\textbf{A}$ matrix \cite{Saad2003}: some perturbation in the matrix, $\textbf{A}+\delta \textbf{A}$, can cause an error in the solution vector, $\vec{x}+\delta \vec{x}$, such that $||\delta \vec{x}||\sim \kappa \times ||\delta \textbf{A}||$.  By taking $E_r$ as an estimate for $||\delta \textbf{A}||$, we obtain the above error  for the solution vector as $\epsilon_0 = ||\delta \vec{x}||\sim \kappa \times E_r$.
The condition numbers given  in Table \ref{data} are computed by using Eq.~(\ref{prob}) to construct the $\textbf{A}$ matrices. For the  average readout error of IBM Q 20 Tokyo device, we use $E_r=6.76\times 10^{-2}$ \cite{IBMQdata2019}. The estimated errors $\epsilon_0$ are given in Table \ref{data}. We see that the relative errors fall below the  respective errors, indicating that the precision limit is due  to the readout error of the current NISQ hardware. Note that the machines are calibrated several times during data collection, so the hardware error varies and the $E_r$ value is only an estimate.

 Figure \ref{err2} shows the results for linear systems of dimension $N=64$ and $N=128$, obtained by the  QASM simulator that performs two quantum walk evolutions with uniformly distributed $(\theta_\ell,\phi_\ell,\lambda_\ell)\in[0,\pi]$. The relevant parameters for these two matrices are given in Table \ref{data}. The results again evidence that the algorithm works well, even in the presence of complex phases $\phi_\ell$'s and $\lambda_\ell$'s.

The communication latency between classical and quantum computer is the most time-consuming part, containing $\mathcal{O}(cn_s)$ communications. Fortunately, this number does not scale as $N$. For users with direct access to the quantum processors,  communication bottleneck should be less severe.


\section{Discussions}

A comparison of  computational  resources is given in Table \ref{complexity}. For  hybrid quantum walk algorithm, we need $1+\log(N)$ qubits, $q\log(N)$ CNOT gates, and $q\log(N)$ $U_3$ gates, where $q$ is the number of evolutions. The initialization takes $\log(N)$ $X$ gates; but since they can be executed simultaneously, the initialization occupies one time slot only. Totally $1+2q\log(N)$ time slots are required for one quantum walk evolution to obtain one component of the solution vector $\vec{x}$. This can be an advantage when one is interested in  partial information about $\vec{x}$. \

The same  amount of time slots can be similarly derived for the classical random walk algorithm. Yet, we stress that these two algorithms deal with different transition probability matrices: factorisable matrices for classical random walk, and more complex correlated matrices for quantum random walk. The qubit-qubit correlation built into the correlated matrix can potentially be harnessed to perform complex tasks.

Other advantages of the algorithms we propose are: 

\noindent{\bf(i)} By restricting the matrices $\textbf{A}$ to those that can be encoded  in Hamming cubes, we can sample both classical and quantum  random walk spaces that scale exponentially with the number of bits/qubits, and hence gain space complexity.\

\noindent{\bf(ii)} Classical Monte Carlo methods have time complexity of $\mathcal{O}(N)$ for general $\textbf{P}$ matrices. For the  matrices here considered, our algorithms have $\mathcal{O}(\log (N))$.

\noindent{\bf(iii)} It is easier to access input and output than the  HHL-type algorithm. 

\noindent{\bf(iv)} Random processes in a quantum computer are  fundamental, and so are not plagued by various problems associated with pseudo-random number generators \cite{Srinivasan2003}, like periods and unwanted correlations.

\noindent{\bf(v)} Our quantum  algorithm  can run on noisy quantum computers whose coherence time is short.

\section{Conclusion}

We propose a hybrid quantum algorithm suitable for NISQ quantum computers to solve systems of linear equations. The solution vector $\vec{x}$ and constant vector $\vec{b}$ we consider here are classical data, so the input and readout  can be executed easily.
Numerical simulations using IBM Q systems support the feasibility of this algorithm.  We demonstrate  that, by performing two quantum walk evolutions, the resulting probability matrix become more correlated in the parameter space.  As long as the quantum circuit in this framework produces  highly correlated probability matrix  with a relatively short circuit depth, we can always gain quantum advantages over classical circuits. 

\begin{acknowledgments}
We thank Chia-Cheng Chang, Yu-Cheng Su, and Rudy Raymond for discussions. Access to IBM Q systems is provided by IBM Q Hub at  National Taiwan University. This work is supported in part by Ministry of Science and Technology, Taiwan, under grant No. MOST 107-2627-E-002 -001 -MY3, MOST 106-2221-E-002 -164 -MY3, and MOST 105-2221-E-002 -098 -MY3.
\end{acknowledgments}

%
%
%

\appendix


\renewcommand{\theequation}{\mbox{A.\arabic{equation}}} 
\setcounter{equation}{0} %

\newtheorem{lemma}{Lemma}
\newtheorem{theorem}{Theorem}

\section{Gray code basis\label{app:gray}}
The natural binary code $B=(B_{n-1},B_{n-2},...,B_1,B_0)$ is transformed to the Gray code basis \cite{Knuth2005} according to
\begin{equation}
g(B)_i=B_{i+1}\oplus B_i ,
\end{equation}
$\forall i \in \{0,...,n-1 \}$ with $B_n=0$. The probability matrix in the Gray code basis is given by
	\begin{eqnarray}
	P^{quantum}_{J',J}&=& \prod_{\ell=0}^{n-1} \Big|U_3(\theta_\ell)_{i_{\ell}, i_{\ell+1}}\Big|^2\label{prob2}\\
	&=& \prod_{\ell=0}^{n-1} \Big|\cos^2 (\frac{\theta_\ell}{2})\Big|^{1-(i_{\ell}\oplus i_{\ell+1})}\Big|\sin^2 (\frac{\theta_\ell}{2})\Big|^{i_{\ell}\oplus i_{\ell+1}}  \nonumber 
	\end{eqnarray}
 with $i_N=0$.

\begin{lemma} \label{lemma1}
	Let $S_N$ be the set of all possible n-bit strings $\{ (S_{n-1},S_{n-2},...,S_1, S_0) | S_i \in \{0,1\}~ \forall~ i  \in \{0,1,...,n-1 \} \}$ with $n=\log_2 N$, and $\pi$ be a permutation of the set $S_N$. If there exists a function $f:S_N \mapsto \mathbb{R}$ such that for $A \in \mathbb{R}^{N\times N}$,
	\begin{equation} \label{condition1}
	A_{I\oplus J,J}=f(I)
	\end{equation}
	$\forall~ I,J \in S_N$, and if $\pi$ is bitwise XOR homomorphic, then we have
	$A_{\pi(I\oplus J),\pi(J)}=f(\pi(I))$.
\end{lemma}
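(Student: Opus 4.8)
The plan is to reduce the claim to a one-line substitution into the hypothesis~(\ref{condition1}), using nothing but the homomorphism property of $\pi$. First I would unfold what ``bitwise XOR homomorphic'' means: for all $I,J\in S_N$ one has $\pi(I\oplus J)=\pi(I)\oplus\pi(J)$. Applying this identity to the first (row) subscript rewrites the quantity of interest as $A_{\pi(I\oplus J),\pi(J)} = A_{\pi(I)\oplus\pi(J),\,\pi(J)}$.

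Next I would invoke the hypothesis~(\ref{condition1}), which holds for \emph{every} pair of $n$-bit strings. In particular it is valid when the roles of $I$ and $J$ are played by $\pi(I)$ and $\pi(J)$, which are again elements of $S_N$ because $\pi$ maps $S_N$ onto itself. Substituting $I\mapsto\pi(I)$ and $J\mapsto\pi(J)$ into $A_{I\oplus J,J}=f(I)$ gives $A_{\pi(I)\oplus\pi(J),\,\pi(J)}=f(\pi(I))$, and chaining this with the rewriting above yields $A_{\pi(I\oplus J),\pi(J)}=f(\pi(I))$, which is precisely the assertion.

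Honestly there is no hard step inside the lemma itself; the only things to be careful about are that the homomorphism property is used in the right direction (collapsing $\pi(I\oplus J)$ into $\pi(I)\oplus\pi(J)$, not the reverse) and that $\pi$ being a permutation of $S_N$ guarantees $\pi(I),\pi(J)\in S_N$ so that~(\ref{condition1}) applies to them. The substantive work lies \emph{outside} this lemma: one must separately check that the relabelling sending the natural binary ordering to the Gray code ordering is bitwise XOR homomorphic, i.e. $g(B\oplus C)=g(B)\oplus g(C)$, which holds because $g(B)_i=B_{i+1}\oplus B_i$ is $\mathbb{F}_2$-linear in $B$; and that both $\mathbf{P}^{classical}$ and $\mathbf{P}^{quantum}$ (in the Gray code basis) have the XOR-translation-invariant form required as the hypothesis of Lemma~\ref{lemma1}, with appropriate functions $f$ read off from Eqs.~(\ref{clpmatelement}) and~(\ref{prob2}). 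Once those ingredients are assembled, Lemma~\ref{lemma1} lets one conclude that a single homomorphic relabelling carries $\mathbf{P}^{classical}$ onto $\mathbf{P}^{quantum}$ in the Gray code representation.
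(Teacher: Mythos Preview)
Your proof is correct and matches the paper's own argument essentially line for line: apply the homomorphism identity $\pi(I\oplus J)=\pi(I)\oplus\pi(J)$ to the row index, then substitute $\pi(I),\pi(J)$ into hypothesis~(\ref{condition1}). Your additional remarks about the surrounding context (the $\mathbb{F}_2$-linearity of $g$ and the XOR-translation-invariant form of the probability matrices) accurately describe Lemma~\ref{lemma2} and the Theorem that follow in the paper.
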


\begin{proof}
Since $\pi$ is bitwise XOR homomorphic, 
 Eq.~(\ref{condition1})  leads to
\begin{eqnarray}
A_{\pi(I\oplus J),\pi(J)}&=& A_{\pi(I)\oplus \pi(J),\pi(J)}\notag\\
&=&f(\pi(I)) 
\end{eqnarray}
$\forall~ I,J \in S_N$. 
\end{proof}

\begin{lemma} \label{lemma2}
	Let $B \in S_N$ be represented by $(B_{n-1},...,B_0)$.
	Let $g:S_N \mapsto S_N$ be a function that transforms from natural bit string  to Gray code according to $g(B)_i=B_{i+1}\oplus B_i$, $\forall~ i \in \{0,1,...,n-1 \} $ with $B_{n}= 0$. Then $g$ is a bitwise XOR homomorphism.
\end{lemma}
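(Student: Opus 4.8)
The plan is to show directly that the Gray-code map $g$ respects the bitwise XOR operation, i.e.\ that $g(B \oplus C) = g(B) \oplus g(C)$ for all $B, C \in S_N$. Since $g$ is defined coordinatewise by $g(B)_i = B_{i+1} \oplus B_i$ (with the convention $B_n = 0$), it suffices to verify the identity in each coordinate $i \in \{0,1,\dots,n-1\}$, and this reduces to an elementary fact about XOR, namely associativity and commutativity of $\oplus$ on single bits.

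First I would fix $B = (B_{n-1},\dots,B_0)$ and $C = (C_{n-1},\dots,C_0)$ in $S_N$, and write $D = B \oplus C$, so that $D_i = B_i \oplus C_i$ for every $i$, with the extended convention $D_n = B_n \oplus C_n = 0 \oplus 0 = 0$ consistent with the definition of $g$. Then for each $i \in \{0,\dots,n-1\}$ I would compute
\begin{equation}
g(D)_i = D_{i+1} \oplus D_i = (B_{i+1}\oplus C_{i+1}) \oplus (B_i \oplus C_i).
\end{equation}
Using commutativity and associativity of $\oplus$ to regroup the four bits, this equals $(B_{i+1}\oplus B_i)\oplus(C_{i+1}\oplus C_i) = g(B)_i \oplus g(C)_i$. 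Since this holds for every coordinate $i$, we conclude $g(B\oplus C) = g(B)\oplus g(C)$, which is exactly the statement that $g$ is a bitwise XOR homomorphism.

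There is essentially no main obstacle here: the only thing to be careful about is the boundary convention $B_n = 0$, which must be applied consistently to $B$, $C$, and $B \oplus C$ so that the coordinate $i = n-1$ case (which involves $B_n$) goes through without a hitch; this is immediate since $0 \oplus 0 = 0$. The lemma is purely bookkeeping, and its role is to supply the hypothesis needed to invoke Lemma~\ref{lemma1} with $\pi = g$, thereby establishing that the permutation carrying $\textbf{P}^{classical}$ to $\textbf{P}^{quantum}$ in the natural binary basis also does so in the Gray code basis.
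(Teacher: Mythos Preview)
Your proof is correct and is essentially identical to the paper's own argument: both verify $g(B\oplus C)_i = g(B)_i \oplus g(C)_i$ coordinatewise by expanding the definition $g(B)_i = B_{i+1}\oplus B_i$ and regrouping the four bits via associativity and commutativity of $\oplus$, with the same care taken for the boundary convention $B_n=0$.
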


\begin{proof}
	Let $I,J \in S_N$ be represented by bit strings $(I_{n-1},...,I_0)$ and $(J_{n-1},...,J_0)$, respectively. Using
	\begin{eqnarray}
	g(I)_i&=&I_{i+1}\oplus I_i\notag\\
	g(J)_i&=&J_{i+1}\oplus J_i
	\end{eqnarray}
	with $I_n=J_n=0$, we get
	\begin{eqnarray}
	[g(I)\oplus g(J)]_i  &=&g(I)_i \oplus g(J)_i\notag\\
	&=&(I_{i+1} \oplus I_{i}) \oplus (J_{i+1} \oplus J_{i})\notag\\
	&=&(I_{i+1} \oplus J_{i+1}) \oplus (I_{i}  \oplus J_{i})\notag\\
	&=&g(I\oplus J)_i .
	\end{eqnarray} 
	$\forall~ i \in {0,...,n-1}$.
\end{proof}

Using \textbf{Lemma \ref{lemma1}} and \textbf{Lemma \ref{lemma2}}, the following theorem is clear.
\begin{theorem}
There exists a permutation that maps the probability matrix produced by classical random walk to the probability matrix  given in Eq.~(\ref{prob2}) produced by the quantum random walk circuit in a reverse order, that is, in Gray code basis.
\end{theorem}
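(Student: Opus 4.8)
The plan is to instantiate Lemma~\ref{lemma1} with $A=\textbf{P}^{classical}$ and with the permutation $\pi=g$ supplied by Lemma~\ref{lemma2}, and then to match the resulting matrix against Eq.~(\ref{prob2}) term by term. First I would note that the classical element in Eq.~(\ref{clpmatelement}) depends on the pair $(J',J)$ only through the bit string $I$ fixed by $|I\rangle_c=|J'\rangle_c\oplus|J\rangle_c$. Hence the function $f:S_N\mapsto\mathbb{R}$ defined by $f(I)=\prod_{\ell=0}^{n-1}\big|\cos^2(\theta_\ell/2)\big|^{1-i_\ell}\big|\sin^2(\theta_\ell/2)\big|^{i_\ell}$ is well defined, and $P^{classical}_{I\oplus J,J}=f(I)$ for all $I,J\in S_N$; this is exactly hypothesis~(\ref{condition1}) of Lemma~\ref{lemma1}.

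Next I would verify that $g$ is a genuine permutation of $S_N$: it has a two-sided inverse, namely the cumulative-XOR (Gray-to-binary) map $B_i\mapsto\bigoplus_{k=i}^{n-1}B_k$, so $g$ is a bijection, and by Lemma~\ref{lemma2} it is bitwise-XOR homomorphic. Applying Lemma~\ref{lemma1} with $\pi=g$ then yields $P^{classical}_{g(I\oplus J),g(J)}=f(g(I))$ for all $I,J$. Setting $J'=I\oplus J$ (equivalently $I=J'\oplus J$) and using $g(J')=g(I)\oplus g(J)$, this becomes $P^{classical}_{g(J'),g(J)}=f(g(J'\oplus J))$.

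Finally I would compare with Eq.~(\ref{prob2}). Since $g(I)_\ell=i_{\ell+1}\oplus i_\ell$ with the boundary convention $i_n=0$, the product $f(g(I))$ equals $\prod_{\ell=0}^{n-1}\big|\cos^2(\theta_\ell/2)\big|^{1-(i_\ell\oplus i_{\ell+1})}\big|\sin^2(\theta_\ell/2)\big|^{i_\ell\oplus i_{\ell+1}}$, which is precisely the right-hand side of Eq.~(\ref{prob2}), i.e.\ $P^{quantum}_{J',J}$ for the reversed-order circuit. Therefore $P^{classical}_{g(J'),g(J)}=P^{quantum}_{J',J}$ for all $J',J$, so $\pi=g$ is the permutation asserted by the theorem.

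I do not expect any analytic obstacle here; the entire substance is carried by the two lemmas already proved. The only delicate points will be bookkeeping: pinning down the index conventions once the walking steps in Eq.~(\ref{Ustep}) are reversed — in particular the boundary term $i_n=0$ and which $U_3({\bf u}_\ell)$ matrix element sits at which position $\ell$ — so that the exponent $i_\ell\oplus i_{\ell+1}$ emerging from the reversed circuit lines up exactly with the $\ell$-th component $g(I)_\ell$ of the Gray-code transform, and confirming that $f$ indeed depends only on the XOR difference $I=J'\oplus J$ (which is immediate from Eq.~(\ref{clpmatelement})).
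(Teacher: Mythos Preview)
Your proposal is correct and follows exactly the approach the paper intends: the paper's own proof is simply the one-line remark ``Using \textbf{Lemma~\ref{lemma1}} and \textbf{Lemma~\ref{lemma2}}, the following theorem is clear,'' and you have faithfully unpacked this by identifying $A=\textbf{P}^{classical}$, $f(I)$ from Eq.~(\ref{clpmatelement}), and $\pi=g$, then matching $f(g(I))$ against Eq.~(\ref{prob2}). Your added check that $g$ is a bijection (hence a genuine permutation, not merely an XOR homomorphism) is a detail the paper glosses over but which is indeed needed for Lemma~\ref{lemma1} to apply as stated.
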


\renewcommand{\theequation}{\mbox{B.\arabic{equation}}} 
\setcounter{equation}{0} %

\section{Derivation of Eq.~(\ref{PJJfortwo})\label{app:Dertwostep}}

We use the evolution operator given in Eq.~(\ref{Ustep}),
\begin{equation}
 \mathcal{U} 
  = \!\!\!\! \sum_{i_{n-1},...,i_0,i_{-1}} \prod_{\ell=0}^{n-1}  [U_3({\bf u}_\ell)]_{i_\ell,i_{\ell-1}} (X_\ell)^{i_\ell} ( | i_{n-1} \rangle_\diamond   \langle i_{-1} |_\diamond )  
\end{equation}
to compute the two-evolution operator
\begin{eqnarray}
   \mathcal{U}^2 \!\!\!
&=&\!\!\! \sum_{\substack{i_{n-1},...,i_0,i_{-1}\\k_{n-1},...,k_0,k_{-1}}} \prod_{\ell=0}^{n-1}  [U_3({\bf u}_\ell )]_{i_\ell,i_{\ell-1}}  [U_3({\bf u}_\ell )]_{k_\ell,k_{\ell-1}} \notag \\
&& \times (X_\ell)^{i_\ell + k_\ell } \delta_{i_{-1},k_{n-1}} ( | i_{n-1} \rangle_\diamond   \langle k_{-1} |_\diamond )  \,. 
\end{eqnarray}

Next we project the $\mathcal{U}^2$ operator  on the  $|\psi_{0J}  \rangle$ state,
\begin{eqnarray}
\mathcal{U}^2|\psi_{0J}  \rangle  &=&   \sum_{\substack{i_{n-1},...,i_0,i_{-1}\\k_{n-1},...,k_0,k_{-1}}} \prod_{\ell=0}^{n-1}  [U_3({\bf u}_\ell )]_{i_\ell,i_{\ell-1}}  [U_3({\bf u}_\ell )]_{k_\ell,k_{\ell-1}}  \notag \\
 &&\times  \delta_{i_{-1},k_{n-1}} \delta_{0,k_{-1}}  | i_{n-1} \rangle_\diamond | I \oplus K \oplus J \rangle_q    \notag \\
  &=&  \sum_{I,K}   f(I,K)   | i_{n-1} \rangle_\diamond     | I \oplus K \oplus J \rangle_q\, , 
\end{eqnarray}
where $f(I,K)$ is given in Eq.~(\ref{fIK}) and 
\begin{equation}
| I \oplus K \oplus J \rangle_q= | i_{n-1} \oplus k_{n-1} \oplus j_{n-1},..., i_{0} \oplus k_{0} \oplus j_0\rangle_q\,.\notag
\end{equation}
We then project $\mathcal{U}^2|\psi_{0J}  \rangle $ on the final state $| k \rangle_\diamond | J' \rangle_q$
\begin{eqnarray}
\langle J'|_q \langle k|_\diamond \mathcal{U}^2|\psi_{0J}  \rangle& =&  \sum_{I,K}   f(I,K)   \delta_{k, i_{n-1}}      \delta_{J', I \oplus K \oplus J } \notag \\
& = & \sum_{I}   f(I,I\oplus J' \oplus J)   \delta_{k, i_{n-1}}  ,   
\end{eqnarray}
which leads to the  probability matrix  elements as
\begin{eqnarray}
P_{J',J} &&=  \langle J'| \mbox{Tr}_\diamond [\mathcal{U}^2 |\psi_{0J}  \rangle \langle \psi_{0J} | (\mathcal{U}^\dagger)^2 ]  |J'\rangle  \notag \\
&&= \sum_k \Big|\langle J'|_q \langle k|_\diamond \mathcal{U}^2 |\psi_{0J}  \rangle \Big|^2 \notag \\
&&= \sum_{k=0}^1 \Big|\sum_{I  } f(I, J'\oplus J \oplus I) \delta_{i_{n-1},k} \Big|^2 .
\end{eqnarray}

\renewcommand{\theequation}{\mbox{C.\arabic{equation}}} 
\setcounter{equation}{0} %

\section{Two-evolution quantum walk on $N=4$ graph\label{app:Fournode}}

The probability matrix elements $P^{quantum}_{J',J}$ for two quantum evolutions $\mathcal{U}^2$ on the four-node graph read 
\begin{eqnarray}
P_{00}&=&P_{11}=P_{22}=P_{33} \\
&=&\frac{1}{4}\sin^2\theta_0+\frac{1}{8}(1+\cos^2\theta_1 +\cos^2\theta_0 \notag  \\
&& +4 \cos \theta_1 \cos \theta_0 +\cos^2 \theta_1 \cos^2 \theta_0 -\sin^2 \theta_1 \sin^2 \theta_0) ,\notag\\
P_{01}&=&P_{23}=\frac{1}{4}\sin^2\theta_1,\\
P_{02}&=&P_{13}=\frac{1}{4}\sin^2\theta_1,\\
P_{03}&=&P_{12}=\frac{1}{4} \Big(1-2\cos\theta_1\cos\theta_0+\cos^2\theta_1 \Big) \, .
\end{eqnarray} 
Surprisingly, in this case the matrix elements do not depend on the $(\phi_1,\phi_2)$ and $(\lambda_1,\lambda_2)$ phases. However,  the matrix elements do depend on complex phases when $N\geq 8$, as can be numerically  checked. Note that $(P_{01},P_{02},P_{23},P_{13})$ depend on $\theta_1$ only: the destructive interference between configurations totally eliminates the $\theta_0$ dependence, which  is difficult to do by simple classical random walks.

\bibliography{rwmi}

\end{document}